\newtheorem{proposition}{\textbf{Proposition}}
\newtheorem{theorem}{\textbf{Theorem}}
\newtheorem{proof}{\textbf{Proof}}
\newtheorem{remark}{\textbf{Remark}}
\journal{International Journal of Critical Infrastructure Protection}
\begin{document}

\begin{frontmatter}

\title{A Dimensionality-Reduction Strategy to Compute Shortest Paths in Urban Water Networks}

\author[1]{Carlo Giudicianni\corref{cor1}} 
\cortext[cor1]{Corresponding author: 
  Tel.: +39-081-000-0000;}
\ead{carlo.giudicianni@unicampania.it}
\author[1,2]{Armando Di Nardo}
\author[3]{Gabriele Oliva}
\author[2]{Antonio Scala}
\author[4]{Manuel Herrera}

\address[1]{Dipartimento di Ingegneria, Universit\`a degli Studi della Campania 'L. Vanvitelli', via Roma 29, Aversa 81031, Italy}
\address[2]{Institute for Complex Systems -- Italian National Research Council, via dei Taurini 19, Roma 00185, Italy}
\address[3]{Dipartimento di Ingegneria, Universit\`a Campus Bio-Medico di Roma, via Álvaro del Portillo 21, Roma 00128, Italy}
\address[4]{Institute for Manufacturing -- Dept. of Engineering, University of Cambridge, 17 Charles Babbage Rd., CB3 0FS Cambridge, United Kingdom}

\begin{abstract}
The efficient computation of shortest paths in complex networks is essential to face new challenges related to critical infrastructures such as a near real-time monitoring and control and the management of big size systems. In particular, using information on the minimum paths in water distribution networks (WDNs) allows to track the diffusion of contaminants and to quantify the resilience and criticality of the system. This is, ultimately, approached by considering dynamically changing path-weights that depend on the flow or on other information available at run-time. These analyses tipically include all the WDN assets but reducing the high degree of physical details with a minimum lost of key information for their performance assessment. This paper proposes a strategy to compute minimum paths that is based on a dimensionality-reduction process. Specifically, the network is partitioned into communities and suitably modified to obtain a reduced complexity representation (e.g., in terms of number of nodes and links). The paper shows how this novel, reduced representation is equivalent to the traditional network on computing the shortest paths. The proposed approach is validated considering two utility networks as case studies. The results show that the proposed method provides the exact solution for the shortest path with a computational-time reduction consistently over 50\% and up to 90\% for some cases. Furthermore, the application of the proposal on WDNs partitioning shows both hydraulic and economic advantages thanks to their monitoring and controlling at near real-time.
\end{abstract}

\begin{keyword}
Utility networks \sep Critical infrastructures \sep Water distribution systems \sep Shortest Paths \sep Dimensionality Reduction \sep Clustering \sep Complex networks
\end{keyword}

\end{frontmatter}


\section{Introduction}
\label{sec1}
Modern society is strongly dependent on infrastructure systems (i.e. transportation, power grids, telecommunications, water systems), which support city growth and economic prosperity. These infrastructures continually face natural and man-made threats that cause economic and social disruption, leading their operators to continuously work on improving safety and security and on speeding up mitigation actions. Today, the reliability and performance assessment, continuous operation, monitoring, and protection of critical infrastructures are national priorities for countries worldwide \citep{alcaraz2015critical}. Furthermore, as cities increase in their size, these infrastructures are getting larger and tangled, showing a complex behaviour due to the high degree of interdependency among them. As a consequence, the management of such infrastructures is becoming an arduous task to address, and there is a need to develop new, agile tools and methodologies to analyse and control them. This is especially true for water distribution networks (WDNs), which are among the most important civil networks as they deliver drinking and industrial water to metropolitan areas. WDNs constitute essential and critical infrastructures, as systems whose operability is of crucial importance to ensure social survival and welfare. They must to face two major threats:
\begin{itemize}
    \item $Contamination$: water infrastructures are strongly vulnerable to malicious and intentional attacks since they are made up of thousands of  exposed  elements \citep{chianese2017dma}; as a consequence, water can be easily polluted by chemical or biological contaminants, which spread all over the system by flowing, with dramatic effects on the citizen health \citep{kroll2006securing};
    \item $Leakages$: water infrastructures are constituted by aged buried pipelines, and as a consequence, pipelines are easily eroded by the moist environment. Furthermore, the daily pressure variability strongly stresses the pipes. These factors lead to the failure and burst of the pipes \citep{wang2019systemic}, causing leakages and wasting huge amount of water \citep{us2010control}.
\end{itemize}

The downside in the management of such infrastructures is that the underlying details of the physics involved in their functioning complicates the analysis to a relevant extent, making it difficult to achieve useful insights in reasonable time \citep{krause2008efficient}. Complexity Science has proven to be a particularly adequate tool for the timely and agile analysis and management of WDNs (and more in general for critical infrastructures) \citep{stergiopoulos2015risk}, especially in the case of limited information about the system \citep{torres2016exploring, Giudicianni2018}. In particular, a complex network representation of infrastructures allows to abstract away from the high degree of physical details of the systems and focus only on a few crucial aspects in a manageable way \cite{DiNardo2018Ap,wang2010generating,beyza2019applying,haznagy2015complex}. At this end, knowledge on the minimum paths (eventually considering dynamically changing weights that depend on the flow or on other information available at run-time) allows to track the diffusion of contaminants and quantify the resilience and criticality of the system and its composing elements \cite{di2017simplified,tinelli2017sampling}.

\textcolor{black}{The work of \cite{Fu2006} encompasses an extensive survey of various heuristic {\em shortest path} (SP) algorithms developed in the last years. It is worth to mention the interesting strategy adopted for practitioners and applied researchers to exploit network's domain-specific information. This is the case of traffic systems researchers adopting the natural hierarchies of the roads to signiﬁcantly speed up the SP computational time \cite{Jagadeesh2002, Jung2002}. Overall, there are two widely investigated strategies for approximate the SP computation in large-scale complex networks. One of them is the $landmark-based$ method. This requires to pre-compute the shortest paths between special nodes (landmark nodes) and all the other nodes in the network, saving these distances in a database. The shortest-path between two nodes is, then, approximated by combining those distances stored in the database \cite{Tang2003,kleinberg2004triangulation, Vieira2007}. The other one is a $topology-based$ approach. This strategy lies on the structure of networks through their partition into discrete areas \cite{Djidjev1996, Henzinger1997, Jing1998, Gutman2004}. In this regard, \cite{Potamias2009} propose an approximated landmark-based method for point-to-point distance estimation in large-scale networks, also adding the partitioning variant. The landmark set is selected for each network area and the shortest paths consequently saved in a database. The Authors also demonstrate that selecting the optimal set of landmark nodes is an \textit{NP-hard} problem.}

This paper proposes a strategy to compute minimum paths, namely {\em Multiscale Shortest Path} (MS-SP), that is based on a dimensionality-reduction process. Specifically, we make a network partition into communities and suitably modify such network to obtain a representation with reduced complexity (e.g., in terms of the number of nodes and edges), namely {\em Multiscale Network} (MS), but equivalent in terms of computing the shortest paths. The proposed approach is validated considering two real-world WDNs as case studies.

An antecedent of this paper can be found on the work of \cite{gong2016efficient} which provides a valid approximation to the shortest path problem for social networks. In such work, the authors propose a combined process for community detection and network reduction, by collapsing communities into nodes of a new network. The proposed algorithm has a similar scope, but the main innovation proposed herein is that the network reduction process identifies a subset of key nodes that lie at the boundary of the communities and transforms the  community into hyper-links connecting such boundary nodes, rather than collapsing the communities in single nodes. As a result, the network collapses into a reduced-size graph were boundary nodes are interconnected by edges that are weighted in a suitable manner to guarantee that the minimum path between two nodes in the original network can be computed in terms of the minimum path between the boundary nodes that are closest to the source and destination, respectively.

Two utility networks validate the proposal. One is the well-known network of Colorado Springs \cite{Lippai2005} at Colorado, US. This network is investigated to facilitate repeating further the proposed MS-SP algorithm. The second case-study corresponds to the operational network supplying water to Alcal\'a de Henares (Madrid, Spain). In this network is more evident the usefulness of the proposal to speed-up the SP computation in large-scale, urban infrastructure networks. In addition to the advantages of the compuation, the dimensionality-reduction process leads to a novel representation of WDNs where it is even possible to obtain a visualisation of the shortest paths.

The outline of the paper is as follows: in Section \ref{sec:methods} the proposed algorithm is presented from a mathematical point of view. Section \ref{sec:experimental} shows the two case studies. Section \ref{sec:results} reports the simulation results with special emphasis on their associated computational efforts. Section \ref{sec:discussion} presents a further application of the algorithm for the management of water networks. The paper closes with a conclusions section which also points out future research directions.

\section{Methods}
\label{sec:methods}

\subsection{Graphs}
Let's $G=\{V,E,W\}$ denote a {\em weighted graph} with a finite number $n$ of nodes $v_i\in V$ with $i\in\{1,\ldots, n\}$ and edges \mbox{$(v_i,v_j)\in E\subset V\times V$} from node $v_i$ to node $v_j$.
For each edge $(v_i,v_j)\in E$ we denote by $w_{ij}\in W$ the associated weight.
A graph is said to be {\it undirected} if $(v_i,v_j)\in {E}$ whenever $(v_j,v_i)\in {E}$, and it is said to be {\it directed} otherwise. In the following we will consider undirected graphs.
For undirected graphs, we assume the weights satisfy $w_{ij}=w_{ji}$ for all $(v_i,v_j)\in {E}$.
Let's the {\em weighted adjacency matrix} of a graph $G=\{V,E,W\}$ be the $n \times n$ matrix $A$ with the same structure as $G$, i.e., such that $A_{ij}=w_{ij}$ if  $(v_i,v_j)\in {E}$ and $A_{ij}=0$, otherwise.
In the case of undirected graphs, matrix $A$ is symmetric.
A {\em path} over a graph $G=\{V,E,W\}$, starting from a node $v_i\in V$ and ending in a node $v_j\in V$, is a subset of links in $E$ that connect $v_i$ and $v_j$; the {\em length} of the path is the sum of the weights associated to the links in the path.
A {\em minimum path} that connects $v_i$ and $v_j$ is the path from $v_i$ to $v_j$ of minimum length.
An undirected graph is {\em connected} if for each pair of nodes $v_i,v_j\in V$ there is a path over $G$ that connects them.

\subsection{Shortest path algorithm}
One of the most well known algorithms to compute the shortest path between two nodes in a weighted graph is Dijkstra's Algorithm (D-SP) \cite{Dijkstra1959}, which can be summarized as follows. Given a weighted graph $G=\{V,E,W\}$ with $|V|=n$ nodes, a start node $v_s$ and a goal node $v_g$, the algorithm keeps track of three variables for each node: 
\begin{itemize}
\item $\texttt{visited}(v_i)$ which is equal to one if the node has already been visited during the algorithm and is zero otherwise;
\item $\texttt{distance}(v_i)$ which is the current estimate for the distance of node $v_i$ from the start node $v_s$;
\item $\texttt{parent}(v_i)$ which is the identifier of the node immediately before node $v_i$ in the path connecting $v_s$ and $v_i$.
\end{itemize}
Moreover, the algorithm keeps track of the node currently being examined, which is referred to as $v_*$.

During the initialisation phase, the algorithm sets $\texttt{visited}(v_s)=1$ and $\texttt{visited}(v_i)=0$, for all $v_i\in V\setminus\{v_s\}$. Moreover, it sets $\texttt{distance}(v_s)=0$ and $\texttt{distance}(v_i)=\infty$, for all $v_i\in V\setminus\{v_s\}$. Finally, the algorithm selects $\texttt{parent}(v_i)=\emptyset$ for all $v_i\in V$ and  sets $v_*=v_s$.
Then, the main cycle of the algorithm is executed; such a main cycle is composed of the following conceptual steps:
\begin{itemize}
\item [Step 1] For all neighbours $v_i$ of $v_*$ such that $\texttt{visited}(v_i)=0$ set the distance of node $v_i$ from $v_s$ as the minimum between the previous estimate and the sum of the distance of $v_*$ from $v_s$ and the weight of the link $w_{*\,i}$ connecting $v_*$ and $v_i$, i.e., 
$$
\texttt{distance}(v_i)=min\left\{\texttt{distance}(v_i),\texttt{distance}(v_*)+w_{*\,i}\right\};
$$
moreover, if the distance is updated for node $v_i$ the algorithm keeps track of the fact that the minimum path from $v_s$ to $v_i$ features the edge $(v_*,v_i)$ by setting
$$
\texttt{parent}(v_i)=v_*.
$$
\item [Step 2] Set $\texttt{visited}(v_*)=1$ 
\item [Step 3] If $\texttt{visited}(v_t)=1$ then stop, the algorithm is terminated.
\item [Step 4] Otherwise, select the node with minimum current distance among the not visited ones as the new current node, i.e.,
$$
v_*=v_j,\quad \mbox{ where } j=\underset{i\,| \texttt{visited}(v_i)=0}{\arg\min} \{\texttt{distance}(v_i)\}
$$
and go back to Step 3.
\end{itemize}

Notice that a straightforward application of the above algorithm yields a computational complexity $\mathcal{O}(|V|^2)$ where $|V|$ is the number of nodes in the graph; moreover, when the graph is particularly sparse, i.e., when $|E|\ll |V|(|V|-1)/2$, where $|E|$ is the number of edges,  it is possible to reduce complexity by using an implementation that relies on data structures such as the so-called Fibonacci heaps  \citep{fredman1987fibonacci}.

\subsection{Clustering approach}
A network community detection algorithm \cite{fortunato2016community} can be used in case the initial partition of the network is not available. These communities (clusters) are formed by grouping elements with similar characteristics or with a higher connection density than that external to the community. There is a large set of community detection algorithms proposed in literature. 
\textcolor{black}{In this paper the Louvain method \cite{blondel2008fast} is adopted, which uses an iterative process to improve the scalability of the overall community detection. Louvain algorithm is a heuristic method based on modularity optimisation \cite{newman2004fast}. In particular, it is divided in two iteratively repeated phases.} 

\textcolor{black}{\begin{itemize}
    \item The algorithm starts assigning a different community to each node of the network, then, for each node $i$ the neighbour $j$ is considered and the gain of modularity that would take place by removing $i$ from its community and by placing it in the community of $j$ is evaluated. After that, the node $i$ is placed in the community for which this gain is maximum. The process is applied for all nodes until no further improvement can be achieved. The first phase stops when a local maxima of the modularity is attained, and no individual move can improve the modularity.
    \item The second phase of the algorithm builds a new network whose nodes are the communities found during the first phase; once it is completed, the first phase of the algorithm is reapplied to the resulting network. The passes are iterated until the maximum of modularity is attained.
\end{itemize}}

It is known that, the Louvain algorithm appears to run in time $\mathcal{O}(|E|)$, where $|E|$ is the number of edges in the graph \citep{traag2015faster}.

\subsection{Multiscale Shortest Path (MS-SP) algorithm}
Given a graph $G=\{V,E,W\}$, the proposed approach to calculate the shortest path from a node $v_s$ to a node $v_t$ is based on a dimensionality reduction procedure, where the network is decomposed into clusters and the nodes/edges in each clusters are collapsed in a suitable way that guarantees that the shortest path computed over the resulting graph corresponds to the one on the original graph.

Specifically, we apply the Louvain clustering algorithm to $G$, decomposing the set of nodes $V$ into $q$ disjoint sets $V_1,\ldots,V_q$, each corresponding to a cluster.
In the following, we denote by $E_i$ the set of edges in the original edge set $E$ that connect nodes in the same cluster, i.e.,
$$
E_i=\{(v_a,v_b)\in E\,|\, v_a,v_b\in V_i\};
$$
moreover, we define
$$
\hat E_{ij}=\{(v_a,v_b)\in E\,| v_a\in V_i \mbox{ and } v_b\in V_j\}
$$
and 
$$E_{ij}=\hat E_{ij}\bigcup\hat E_{ji}.$$
Finally, we define the set of {\em boundary nodes} $V_i^b\subseteq V_i$ as the set of nodes in $V_i$ that belong to at least one edge in $E_{ij}$ for some $j\in\{1,\ldots,q\}\setminus \{i\}$, i.e.
$$
V_i^b=\{v_a\in V_i\,|\,\, \exists (v_a,v_b) \in E,\,\, v_b\not\in V_i\}.
$$
 In other words, $E_{ij}$ is the set of edges that connect nodes in $V_i$ and nodes in $V_j$, and it holds $E_{ij}=E_{ji}$.
Specifically, by running the clustering procedure described above, the network is decomposed into $q$ clusters.

The dimensionality reduction strategy consists in the construction of a graph
$$
\widetilde G=\{\widetilde V,\widetilde E,\widetilde W\},
$$
where $\widetilde V$ includes the set of boundary nodes and the start and goal nodes, i.e.,
$$\widetilde V=\{v_s,v_t\}\bigcup_{i=1}^q V_i^b.$$
As for the edge set $\widetilde E$, we have that
$$\widetilde E=\widetilde E_{\texttt{in}}\bigcup \widetilde E_{\texttt{out}},$$
where $\widetilde E_{\texttt{out}}$ is the union of the edges connecting boundary nodes, i.e.,
$$
\widetilde E_{\texttt{out}}=\bigcup_{i,j\in \{1,\ldots,q\}} E_{ij}
$$
and $\widetilde E_{\texttt{in}}$ is the union of sets $\widetilde E_{\texttt{in}}^i$ of edges that directly connect the boundary nodes in the $i$-th cluster. Note that, if the start or goal nodes are in the $i$-th cluster, then the start or goal nodes are considered as a boundary node.

As for the weights, we select $\widetilde w_{ab}= w_{ab}$ whenever $(v_a,v_b)\in \widetilde E_{\texttt{out}}$, while for each  pair of boundary nodes $v_a,v_b$ that belong to the same cluster $i$ (including the start or goal node if they belong to cluster $i$), we compute the minimum path $p^i_{ab}$ between $v_a$ and $v_b$ over the subgraph of $G$ induced by considering just the nodes $V_i$ in the $i$-th cluster and we set the weight as the length of the path $p^i_{ab}$, i.e.,
$$
w_{ab}=\sum_{(v_h,v_k)\in p^i_{ab} }w_{hk} .
$$
At this point, the algorithm finds the minimum path between nodes $v_s$ and $v_t$ by computing  the minimum path between $v_s$ and $v_t$ over $\widetilde G$. Note that, by keeping track of the minimum paths involving boundary nodes in each cluster (treating $v_s$ and $v_t$ as boundary nodes), we are able to reconstruct the minimum path over $G$ in terms of the minimum path over $\widetilde G$.
 
The algorithm is graphically explained by Figure~\ref{fig:MS-SP} in which there are 3 groups: the upper-left cluster contains three boundary nodes (and the start node), the right cluster has three boundary nodes and the lower cluster has two boundary nodes (plus the target node). As a result of the decomposition, we obtain a network with $|\widetilde V|=10$ nodes (i.e., the boundary nodes plus the start and goal) and $|\widetilde E|=16$ edges; in particular, the four edges connecting nodes in different clusters are kept, while for each pair of boundary (or start/goal) nodes in each cluster a new link is added, whose weight corresponds to the length of the minimum path, computed over the subgraph induced by the nodes in the cluster. At this point, the minimum path is computed by computing the minimum path over $\widetilde G$.

\begin{figure}[!hbt]
    \centering
    \begin{subfigure}[b]{0.4\linewidth} 
        \centering
        \includegraphics[width=\textwidth]{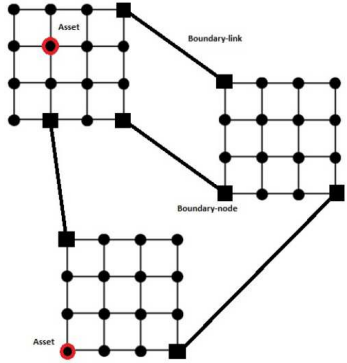}
        \caption{Original network showing key elements}
        \label{fig:MS-SP1}
    \end{subfigure}
    \hfill
    \begin{subfigure}[b]{0.4\linewidth}
        \centering
        \includegraphics[width=\textwidth]{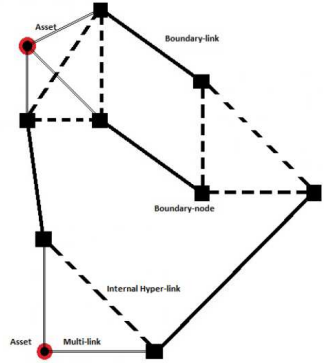}
        \caption{MS network showing key elements} 
        \label{fig:MS-SP2}
    \end{subfigure}
    \caption{Graphical explanation of the dimensionality reduction process for computing the SP algorithm}
    \label{fig:MS-SP} 
\end{figure}

\subsection{Correctness of the algorithm}
The following theorem establishes that the path found via the proposed algorithm is, indeed, a minimum path.
\begin{theorem}
The minimum path between nodes $v_s$ and $v_t$ over $\widetilde G$ is equivalent to the one connecting $v_s$ and $v_t$ over the original graph $G$.
\end{theorem}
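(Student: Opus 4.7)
The plan is to prove equality of the two shortest-path lengths by a two-sided inequality, writing $d_G(\cdot,\cdot)$ and $d_{\widetilde G}(\cdot,\cdot)$ for the minimum path lengths in the respective graphs. Throughout, I adopt the stated convention that $v_s$ and $v_t$ are treated as boundary nodes of their own clusters, so both sit in $\widetilde V$ and are endpoints of appropriate hyper-edges in $\widetilde E_{\texttt{in}}$.

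\textbf{Direction} $d_{\widetilde G}(v_s,v_t)\le d_G(v_s,v_t)$. I would take a minimum $v_s$-$v_t$ path $p^{\star}$ in $G$ and decompose it, as a sequence of edges, into maximal blocks lying entirely within a single cluster, separated by inter-cluster edges. Each inter-cluster edge belongs to some $E_{ij}\subseteq \widetilde E_{\texttt{out}}$ with unchanged weight. Each intra-cluster block runs between two nodes that are either $v_s$, $v_t$, or endpoints of adjacent crossing edges, hence boundary nodes of that cluster. A standard sub-path exchange argument then shows that the block is a minimum path within the cluster subgraph: if a strictly shorter intra-cluster path between the same endpoints existed, splicing it into $p^{\star}$ would contradict the minimality of $p^{\star}$. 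Consequently the block length equals the weight of the corresponding $\widetilde E_{\texttt{in}}^i$ hyper-edge, and concatenating all $\widetilde E_{\texttt{out}}$ edges and $\widetilde E_{\texttt{in}}$ hyper-edges produces a walk in $\widetilde G$ from $v_s$ to $v_t$ whose total weight is exactly $d_G(v_s,v_t)$.

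\textbf{Direction} $d_G(v_s,v_t)\le d_{\widetilde G}(v_s,v_t)$. I would take a minimum path $\widetilde p^{\star}$ in $\widetilde G$ and expand it edge by edge. Each $\widetilde E_{\texttt{out}}$ edge is a genuine edge of $G$ with the same weight; each $\widetilde E_{\texttt{in}}^i$ edge is, by construction, the length of the minimum intra-cluster path $p^i_{ab}$ in $G$, which can be substituted in full. Concatenating these substitutions yields a $v_s$-$v_t$ walk in $G$ of total weight $d_{\widetilde G}(v_s,v_t)$; with nonnegative edge weights (the natural setting for WDNs), this walk contains a simple $v_s$-$v_t$ path of no greater length, giving the desired inequality. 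The explicit correspondence between minimum-path descriptions in $\widetilde G$ and in $G$ then follows as a byproduct of this expansion.

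\textbf{Main obstacle.} The genuinely delicate step is the sub-path optimality inside a cluster used in the first direction. It relies on the observation that a \emph{maximal} intra-cluster block of $p^{\star}$ stays inside the cluster by definition, so the exchange argument can be carried out purely on the cluster subgraph that defines $\widetilde w_{ab}$. A secondary but easily overlooked bookkeeping point is to propagate the convention that $v_s$ and $v_t$ are boundary nodes uniformly through $\widetilde V$, $\widetilde E_{\texttt{in}}$ and the weight definition; otherwise the initial or terminal block of the decomposition might not correspond to any hyper-edge of $\widetilde G$, and the construction would stall at the endpoints. Beyond these two points, the proof is essentially bookkeeping on the cluster-induced decomposition of paths.
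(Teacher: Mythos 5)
Your proof is correct, and it takes a genuinely more careful route than the paper's own argument. The paper proves the theorem by asserting a three-segment decomposition of the optimal path in $G$ (a piece from $v_s$ to some boundary node $v_{s'}$ of the source cluster, a piece from $v_{s'}$ to some boundary node $v_{t'}$ of the target cluster, and a piece from $v_{t'}$ to $v_t$) and then claiming ``by construction'' that each piece is a minimum path; the same-cluster case is likewise dispatched ``by construction.'' This glosses over exactly what your maximal-block decomposition handles explicitly: the optimal path may weave through several intermediate clusters, or leave and re-enter a cluster (including the source or target cluster), so one must verify that every maximal intra-cluster block is matched by a hyper-edge of $\widetilde E_{\texttt{in}}$ of no greater weight, and conversely that every edge of a path in $\widetilde G$ expands back to a genuine subpath of $G$ of equal weight. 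Your two-sided inequality supplies precisely this missing bookkeeping, and it treats the same-cluster and different-cluster cases uniformly. Two small remarks. First, for the direction $d_{\widetilde G}(v_s,v_t)\le d_G(v_s,v_t)$ the exchange argument is stronger than needed: it suffices that each intra-cluster block is itself a candidate path in the cluster subgraph, so its length is automatically at least the hyper-edge weight $\widetilde w_{ab}$; you only need true sub-path optimality if you want the constructed walk in $\widetilde G$ to have weight exactly $d_G(v_s,v_t)$ rather than at most that. Second, the nonnegativity of weights that you invoke when extracting a simple path from a walk is never stated in the paper but is a legitimate standing assumption, since Dijkstra's algorithm (on which the whole pipeline rests) requires it anyway.
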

\begin{proof}
Let's $v_s$ and $v_t$ belonging to the same cluster in $G$. Then, by construction, the minimum path found over $\widetilde G$ corresponds to the one over $G$.
Let's assume now that $v_s$ and $v_t$ belong to different clusters with node sets $V_s$ and $V_t$. By construction, since the clusters are connected only via edges joining boundary nodes belonging to different clusters, the minimum path joining $v_s$ and $v_t$ in $G$ features a path from $v_s$ to a node $v_{s'}\in V_s$, a path from $v_{s'}$ to a node $v_{t'}\in V_t$ and path from a node $v_{t'}$ to $v_t$ (note that if $v_s=v_{s'}$ or $v_t=v_{t'}$ the path joining them is the empty set).

At this point, we observe that the path connecting $v_s$ to any $v_{s'}\in V_s$ and the path connecting $v_{t}$ to any $v_{t'}\in V_t$ are, by construction, minimum paths; similarly, the path connecting any $v_{s'}\in V_s$ and $v_{t'}\in V_t$ with (recall that we assumed $s\neq t$) is a minimum path. Hence, by construction, the minimum path found over $\widetilde G$ corresponds to a minimum path
 $p_{st}=p_{ss'}\bigcup p_{s't'}\bigcup p_{t't}$
 over the original graph, for some $v_{s'}\in V_s$ and $v_{t'}\in V_t$.
The proof is complete.
\end{proof}

\subsection{Time complexity of the MS-SP algorithm}
\textcolor{black}{In the following, it is shown the computational cost of the proposed algorithm. It is important to point out that, the core idea of working on a size-reduced graph does not depending on the chosen clustering algorithm. As a consequence, a faster method can be adopted making the proposed algorithm even more convenient from a computational point of view.} 

\begin{proposition}
The computational complexity of the proposed approach, including the clustering procedure and the construction of the reduced graph $\widetilde G$, is equal to
$$
max\left\{\mathcal{O}(|E|), \mathcal{O}(n_b^2), \mathcal{O}\left(\sum_{i=1}^q |V_i|^2 |V_i^b|^2\right)\right\},
$$
where $V_i$ is the set of nodes in the $i$-th cluster and $V_i^b$ is the set of boundary nodes in the $i$-th cluster and $n_b=\sum_{i=1}^q |V_i^b|$ is the cardinality of the set of all boundary nodes identified by applying Louvain algorithm.
\end{proposition}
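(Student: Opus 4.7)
The plan is to decompose the MS-SP procedure into its elementary stages and bound each separately: (i) running the Louvain clustering on $G$; (ii) a single scan of the edge set to identify the inter-cluster edges $\widetilde E_{\texttt{out}}$ and the sets of boundary nodes $V_i^b$; (iii) the construction of the intra-cluster hyper-edges $\widetilde E_{\texttt{in}}^i$ and their weights via shortest-path computations restricted to each cluster; and (iv) one final shortest-path computation on the reduced graph $\widetilde G$. Since the overall running time is the sum of these four contributions, in big-$\mathcal{O}$ notation it coincides with the maximum of the three expressions appearing in the statement.

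For stage (i) I would invoke directly the complexity $\mathcal{O}(|E|)$ of the Louvain algorithm recalled in the previous subsection. Stage (ii) is a single traversal of $E$: for every edge $(v_a,v_b)\in E$ one checks whether $v_a$ and $v_b$ lie in the same cluster, and, if not, adds the edge to $\widetilde E_{\texttt{out}}$ and flags both endpoints as boundary nodes of their respective clusters; this again costs $\mathcal{O}(|E|)$ and is absorbed in the first term. Stage (iv) is a single invocation of Dijkstra's algorithm on $\widetilde G$, whose node set has cardinality $|\widetilde V|=n_b+2=\mathcal{O}(n_b)$ (the start and goal nodes are either already boundary or promoted to that role by construction), which yields the $\mathcal{O}(n_b^2)$ term.

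The main accounting effort is stage (iii). Within each cluster $i$, for every unordered pair of boundary nodes $(v_a,v_b)\in V_i^b\times V_i^b$ a shortest path between $v_a$ and $v_b$ must be computed on the subgraph of $G$ induced by $V_i$ in order to set $\widetilde w_{ab}$. A straightforward implementation runs Dijkstra independently for each pair: every invocation has cost $\mathcal{O}(|V_i|^2)$ on the induced subgraph, and there are $\mathcal{O}(|V_i^b|^2)$ such pairs, giving $\mathcal{O}(|V_i|^2|V_i^b|^2)$ for cluster $i$ and $\mathcal{O}\!\left(\sum_{i=1}^q |V_i|^2|V_i^b|^2\right)$ after summation over the $q$ clusters. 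The principal obstacle is purely bookkeeping: one has to ensure that each intra-cluster Dijkstra is executed on the restricted subgraph (so that the factor $|V_i|^2$, rather than $|V|^2$, appears) and that the pairs iterated over are genuinely boundary-to-boundary within the same cluster. Summing the four bounds and collapsing the sum into a maximum then yields the claimed complexity.
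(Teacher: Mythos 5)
Your proposal is correct and follows essentially the same route as the paper: Louvain in $\mathcal{O}(|E|)$, an $\mathcal{O}(|E|)$ edge scan for the boundary nodes, per-cluster shortest-path computations totalling $\mathcal{O}\left(\sum_{i=1}^q |V_i|^2 |V_i^b|^2\right)$, a final Dijkstra on $\widetilde G$ in $\mathcal{O}(n_b^2)$, and the observation that serial composition turns the sum into a maximum. If anything, your per-pair accounting (an $\mathcal{O}(|V_i|^2)$ Dijkstra for each of the $\mathcal{O}(|V_i^b|^2)$ boundary pairs) justifies the stated bound more cleanly than the paper's own wording, which speaks of single-source computations yet writes the quadratic factor $|V_i^b|^2$.
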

\begin{proof}
The computational complexity of the Louvain method is $\mathcal{O}(|E|)$.
Moreover, once the clusters are formed, we need to scan all the edges to identify the set of boundary nodes, a procedure that requires $\mathcal{O}(|E|)$.

At this point, the proposed algorithm computes the shortest path over the subgraph induced by each cluster among each pair of boundary nodes in that cluster; each cluster has $|V_i|$ nodes, hence the computation of the shortest path from one node in the cluster to all other nodes in the cluster requires $O(|V_i|^2)$, since each cluster has $\mathcal{O}( |V_i^b|)$ distinct pairs of boundary nodes, we have that the computational complexity for each cluster is $\mathcal{O}(|V_i|^2 |V_i^b|^2)$.
Since the clusters are $q$ we get $\mathcal{O}(\sum_{i=1}^q |V_i|^2 |V_i^b|^2)$.

To conclude, the application of Dijsktra's algorithm on the reduced-size network has a complexity $\mathcal{O}(n_b^2)$; the proof follows since the two operations are done in series, hence the computational complexity is
equal to the largest among the computational complexities of the above procedures.
\end{proof}

Note that the computational complexity of the computation of the minimum path, after the graph  $\widetilde G$ has been created is remarkably smaller, being $n_b\ll|V|$ for real world networks. Similarly, the complexity of the clustering procedure, although being theoretically upper bounded by $\mathcal{O}(|V|^2)$, is likely to be remarkably smaller,  especially when the graph is sparse and $|E|\ll|V|(|V|-1)/2$, $|V|(|V|-1)/2$ being the number of edges in a complete graph.

As for the calculation of the minimum paths among the boundary nodes in the same cluster, we observe that there may be instances where complexity is above Dijsktra's algorithm\footnote{Consider for instance the case where the graph is full and is arbitrarily divided into $4$ clusters with the same number of nodes; in this extreme case $V_i^b=V_i$ and thus the complexity of the proposed algorithm would be $\mathcal{O}(|V|^4)$.}; however the likelihood of facing such instances is nearly zero in the case of WDNs and, in general, for graphs that have high sparsity and modularity. In fact, as discussed in the next remark, for those graphs the complexity of the construction of $\widetilde G$ is likely to be well below the one of Dijsktra's Algorithm. This fact is experimentally demonstrated in the next section.
\begin{remark}
Note that the complexity of computing the minimum paths locally at every cluster has a complexity
$
\mathcal{O}\left(\sum_{i=1}^q |V_i|^2 |V_i^b|^2\right)$.
However, especially when the network has a clear modular structure, the number $q$ of clusters is likely to be sublinear\footnote{ For instance, in \citep{Giudicianni2018} it is shown that for real WDNs the optimal number of clusters is $q\approx n^{0.3}$.} in $|V|$ (e.g., $q=|V|^{\gamma}$ with $\gamma\in(0,1)$). Hence, on average, also the cardinality $|V_i|$ of the node set of the clusters  is likely to be sublinear, i.e.,
$$
|V_i|\approx n/q= |V|^{1-\gamma}.
$$. 
Moreover, the cardinality of $V_i^b$ is likely to satisfy $|V_i^b|\ll |V_i|$ and, in several practical cases, can be assumed to be constant for planar graphs and WDNs (see \citep{Giudicianni2018}), i.e., $|V_i^b|\approx \mathcal{O}(1)$.
Hence, in practical cases of interest for this paper, we have 
$$
\mathcal{O}\left(\sum_{i=1}^q |V_i|^2 |V_i^b|^2\right)\approx
\mathcal{O}\left( |V|^{1+\gamma} \right)<\mathcal{O}\left( |V|^2\right).
$$
\end{remark}

\begin{remark}
Note that the construction of $\widetilde G$ can be slightly modified in order to be the base for the calculation of {\em all} shortest paths. In fact, it is sufficient to compute all shortest paths among every node in each cluster (i.e., requiring a computational complexity $\mathcal{O}(\sum_{i=1}^q |V_i|^2 |V_i|^2)=\mathcal{O}(\sum_{i=1}^q |V_i|^4)$) and storing information on the paths within each cluster. In this way, the graph $\widetilde G$ for calculating a path from any node $v_s$ to any node $v_t$ can be constructed by considering the links connecting boundary nodes and those connecting the start and goal nodes to the boundary nodes, an operation that requires at most $\mathcal{O}(|V|)$ in the worst case).
\end{remark}

\section{Experimental study}
\label{sec:experimental}
Urban utilities such as water, gas, or electric power networks can be modelled as quasi-planar graphs (e.g., networks forming vertices wherever two edges cross) with spatially organised weighted graphs $G=(V,E,w)$. In the case of water distribution systems the set $V$ of $n$ vertices/nodes encompasses junctions, water sources and demand points. The set $E$ of $m$ edges/links includes pipes, pump stations, and valves. Eventually, $w$ is a function that assigns a weight to each edge quantifying the physical characteristics (diameter, length, roughness) for each pipe \cite{Giudicianni2018}. In particular, WDNs are strongly constrained by their geographical embedding \cite{boccaletti2006complex} in that connections between distant nodes are
unlikely to be found, due to physical and economic constraints. 

\subsection{Benchmarking water distribution system}
MS-SP is firstly tested on the real medium-size Colorado Springs (US) water utility - which  currently serves a population of about 370,000 inhabitants. Figure~\ref{fig:Colorado1} shows its network layout. This encompasses 1,782 junctions and 4 reservoirs ($n =$ 1,786 nodes), 1,985 pipes, 6 pumps and 4 valves ($m = 1995$ links). Figure~\ref{fig:Colorado2} clearly demonstrates the size reduction of the Colorado water network after its transformation into a MS network. 
From a visual analytics point of view, Figure~\ref{fig:Colorado2} naturally highlights both the more inter-connected network areas and bottleneck links likely related to most vulnerable parts of the system.

\begin{figure}[!ht]
    \centering
    \begin{subfigure}[b]{0.9\linewidth} 
        \centering
        \includegraphics[width=\textwidth]{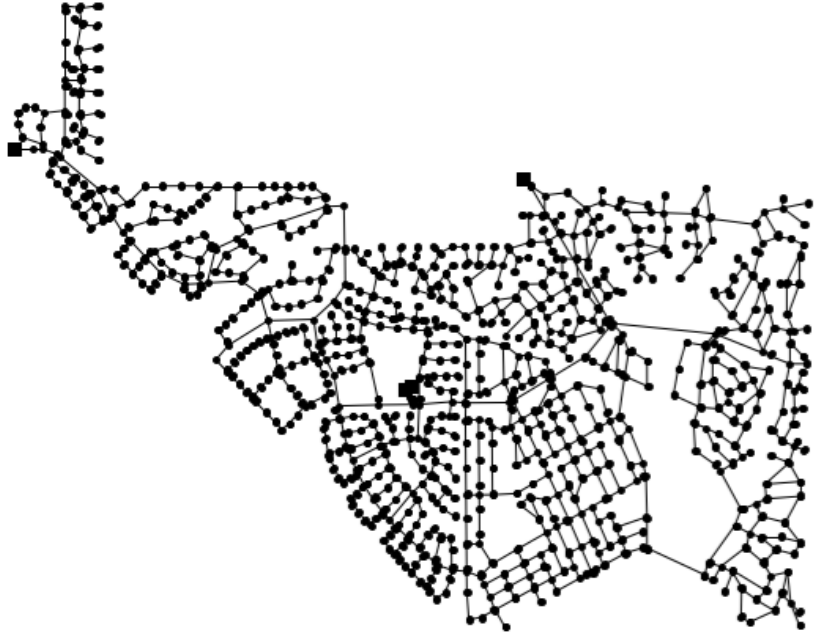}
        \caption{Water network layout of Colorado Springs}
        \label{fig:Colorado1}
    \end{subfigure}
    \\
    \begin{subfigure}[b]{0.9\linewidth}
        \centering
        \includegraphics[width=\textwidth]{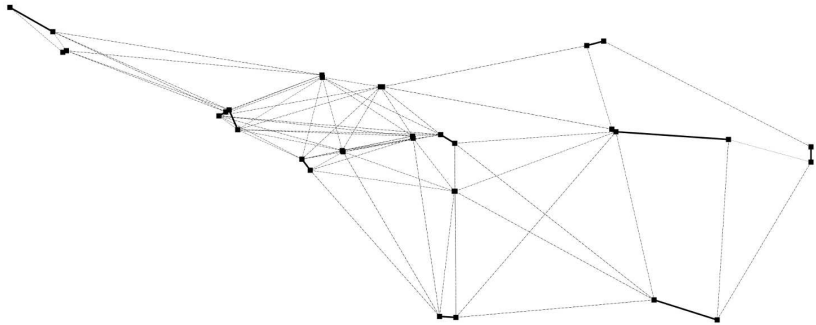}
        \caption{Multiscale water network of Colorado Springs}
        \label{fig:Colorado2}
    \end{subfigure}
    \caption{Multiscale dimensionality reduction for Colorado Springs water network}
    \label{fig:Colorado_} 
\end{figure}

\subsection{Large scale water distribution system}
The second case-study corresponds to the large-scale water utility which serves the Spanish city of Alcal\'a de Henares. It is located 22 miles northeast of the country's capital, Madrid, and it counts on a population of 201,000 inhabitants. The water distribution network model (see Figure~\ref{fig:Alcala1}) encompasses 11,473 junctions, 3 reservoirs ($n$ = 11,476 nodes), and 12,454 pipes, ($m$ = 12,454 links).  Figure~\ref{fig:Alcala2} shows the corresponding MS network layout. 

\begin{figure}[!ht]
    \centering
    \begin{subfigure}[b]{0.9\linewidth} 
        \centering
        \includegraphics[width=\textwidth]{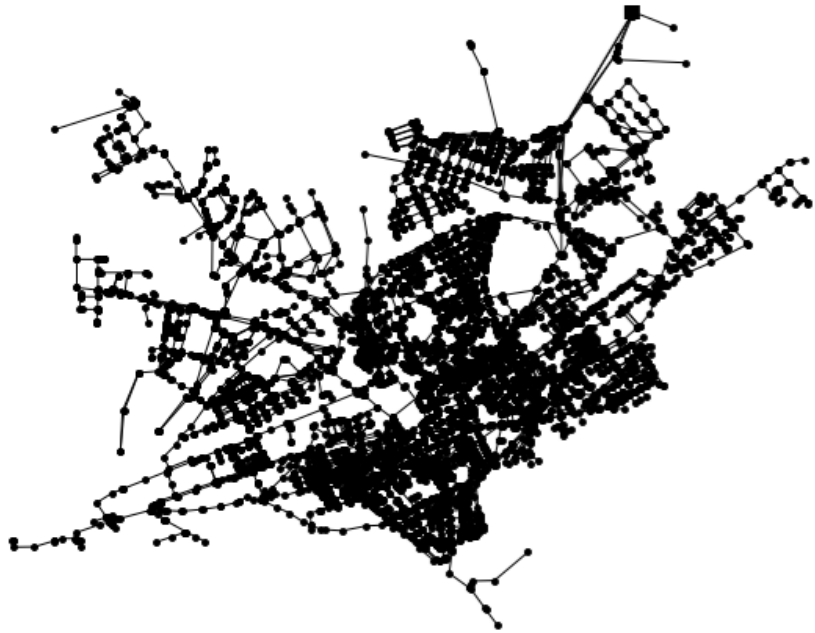}
        \caption{Water network layout of Alcal\'a}
        \label{fig:Alcala1}
    \end{subfigure}
    \\
    \begin{subfigure}[b]{0.9\linewidth}
        \centering
        \includegraphics[width=\textwidth]{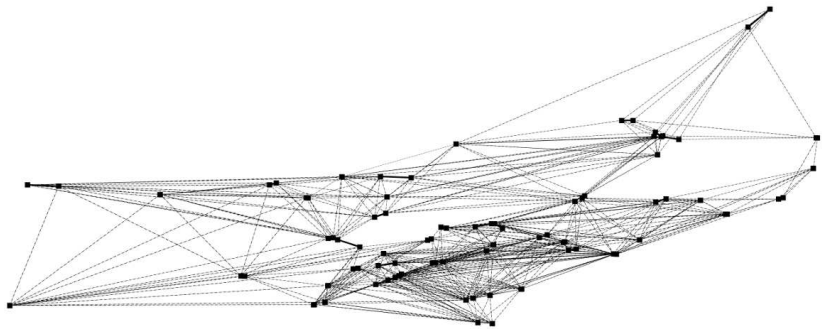}
        \caption{Multiscale water network of Alcal\'a}
        \label{fig:Alcala2}
    \end{subfigure}
    \caption{Multiscale dimensionality reduction for Alcal\'a water network}
    \label{fig:Alcala_} 
\end{figure}

\section{Results}
\label{sec:results}
The first results this section introduces are those corresponding to the topological analysis of the original water network and the MS network, for both case studies. A comparison between the original layout and the collapsed one is carried out in terms of: 

\begin{itemize}
    \item \textit{Links Density} $q$ which is the ratio between the total number $m$ of network edges and the maximum number of edges $m^*= n(n-1)/2$ of a network with $n$ nodes:
                                                           
\begin{equation}
q=\frac{2m}{n(n-1)}
\end{equation}  
                                                          
    \item \textit{Average Node Degree} $\overline{K}$ is the average value of the node degree $k_i$ (number of edges concurring in the node) over all nodes $n$: 

\begin{equation}
\overline{K}=\frac{2m}{n}
\end{equation}

    \item \textit{Diameter} $D$ \citep{Watts1998} is defined as the maximum shortest distance (the maximum geodesic length) $d_{ij}$ between any pair of vertices $i$ to node $j$ (computed as the number of edges along the shortest path connecting them): 

\begin{equation}
D=\max d_{ij}
\end{equation}

    \item \textit{Average Path Length} $l$ \citep{Watts1998} is the average number of steps along the shortest paths for all possible pairs of nodes in the network: 
\begin{equation}
l =\frac{2\sum d_{ij}}{n(n-1)}
\end{equation}

     \item \textit{Algebraic Connectivity} $\lambda_2$ \citep{Fiedler1973} corresponds to the second smallest eigenvalue of graph Laplacian matrix $L$ 
    
     \item \textit{Spectral Gap} $\Delta \lambda$ \citep{Estrada2006} is the difference between the first and second eigenvalue of the adjacency matrix $A$.
\end{itemize}

Table \ref{tab:topo_network} enumerates the main topological metrics computed for both case studies on the original and the MS network. The total number of links $m_b$ for the MS network is equal to the sum of the boundary links $m_{ex}$ and the internal hyper-links $m_{in}$. The size problem reduction is evident on nodes (from $n = 1,786$ to $n_b = 33$ for Colorado and from $n = 11,476$ to $n_b = 114$ for Alcal\'a) and also on links (from $m$ = 1992 to $m_b$ = 83 for Colorado and from $m = 12.454$ to $m_b = 596$ for Alcal\'a). The connectivity $\overline{K}$ strongly increases for the both the MS network (from $\overline{K} = 2.23$ to $\overline{K} = 5.03$ for Colorado and from $\overline{K} = 2.17$ to $\overline{K} = 10.46$ for Alcal\'a).

The dimensionality reduction working with the MS network makes the network density increases up to 2 orders of magnitude (from $q$ = 0.001 to $q$ = 0.157 for Colorado and from $q$ = 0.0002 to $q$ = 0.0933 for Alcal\'a). This augmented inter-connectivity is also reflected by the two spectral metrics measuring the robustness of network. The algebraic connectivity and the spectral gap also increase when moving from the original to the MS network, as it is shown in Table~\ref{tab:topo_network}. These changes in the topological metrics reflect the shift in the structure of the MS network which can be regarded now as low interconnected small-world clusters (whose links are the internal hyper-links). In fact, after the size reduction due to the application of the MS-SP algorithm, each cluster of the MS network becomes into a fully connected layout, weakly linked to other clusters by the boundary links. The typical small-world behaviour is also confirmed by the low value of the communication metrics as they are the diameter and the average path length which scale approximately with the $\log(n)$, as happens for the small world networks (Table~\ref{tab:topo_network}).

\begin{table}[!ht]\scriptsize
\caption{Topological characteristics of the original water network and the MS network layout for Colorado Springs and Alcal\'a de Henares} 
\label{tab:topo_network} 
\centering 
\scalebox{1.0}{
\begin{tabular}{c c c c c c}
\hline
\textbf{Metric}  & \textbf{Colorado} & \textbf{Colorado-MS} & \textbf{Alcal\'a} & \textbf{Alcal\'a-MS}\\
\hline
  $n$ or $n_b$  & 1786  & 33 &  11,476 & 114 \\
  $m$ or $m_b$  & 1995  & 83 &  12,454 & 596 \\
  $\overline{K}$  & 2.23  & 5.03 & 2.17 & 10.46 \\
  $q$  & 0.0012  & 0.1571 & 0.0002 & 0.0933 \\
  $D$  & 69  & 8 & 163 & 9 \\
  $l$  & 25.94  & 3.15 & 64.88 & 3.87 \\
  $\lambda_2$  & 0.00053  & 0.23512 & 0.00009 & 0.15884 \\
  $\Delta \lambda$ & 0.1293  & 0.1735  & 0.0957 & 0.0587 \\
\hline  
\end{tabular}}
\end{table}

The simulation results for Colorado and Alcal\'a water utilities are reported in Table~\ref{tab:Colorado} and Table~\ref{tab:Alcala}, respectively. A suitable number of clusters $C$ is taken on both cases to optimise the overall connectivity of the partitioned network, according to the relationship $C_{opt} \propto n^{0.28}$ reported in \cite{Giudicianni2018}, where $C_{opt}$ is the optimal number of clusters from a topological point of view. As a result, the number of clusters for Colorado is set to $C=8$, while $C=13$ for Alcal\'a's network. Up to 10 paths are generated by connecting random pairs of source and target to validate the proposed MS-SP algorithm. For each pair, the shortest path is computed by running the code 10 times and averaging the computational time.

\begin{table}[!ht] \scriptsize
\caption{Simulation results for the Colorado Springs water network} 
\label{tab:Colorado} 
\resizebox{0.95\columnwidth}{!}{%
\centering 
\begin{tabular}[t]{c c c c c c}
\hline
\textbf{Pairs} & \textbf{D-SP value} & \textbf{MS-SP value} & \textbf{D-SP time} & \textbf{MS-SP time} & \textbf{Red. time} \\ 
 & [-] & [-] & [s] & [s] & [\%] \\
\hline
  1  & 13  & 13 & 0.0010 & 0.0001 & 90.0 \\
  2  & 21  & 21 & 0.0015 & 0.0005 & 66.6 \\
  3  & 29  & 29 & 0.0022 & 0.0006 & 72.6 \\
  4  & 33  & 33 & 0.0026 & 0.0007 & 72.9 \\
  5  & 38  & 38 & 0.0032 & 0.0004 & 87.4 \\
  6  & 41  & 41 & 0.0033 & 0.0006 & 81.7 \\
  7  & 52  & 52 & 0.0043 & 0.0007 & 83.6 \\
  8  & 56  & 56 & 0.0036 & 0.0005 & 85.8 \\
  9  & 60  & 60 & 0.0041 & 0.0006 & 85.2 \\
  10 & 66  & 66 & 0.0039 & 0.0004 & 89.6 \\ 
\hline
\end{tabular}
}
\end{table}

MS-SP algorithm provides the exact value of the shortest path between each pairs of randomly generated source and target nodes. This represents a clear advantage with respect to previous methodologies whom provide approximated results. Table \ref{tab:Colorado} and Table \ref{tab:Alcala} clearly state the D-SP and the MS-SP provide the same results (difference is equal to zero).

\begin{table}[!ht] \scriptsize
\caption{Simulation results for the Alcal\'a water network} 
\label{tab:Alcala} 
\centering 
\resizebox{0.95\columnwidth}{!}{%
\begin{tabular}{c c c c c c}
\hline
\textbf{Pairs} & \textbf{D-SP value} & \textbf{MS-SP value} & \textbf{D-SP time} & \textbf{MS-SP time} & \textbf{Red. time} \\ 
 & [-] & [-] & [s] & [s] & [\%] \\
\hline
  1  & 32  & 32 & 0.0007 & 0.0003 & 50.2 \\
  2  & 40  & 40 & 0.0019 & 0.0004 & 80.5 \\
  3  & 53  & 53 & 0.0032 & 0.0005 & 84.7 \\
  4  & 60  & 60 & 0.0041 & 0.0006 & 85.1 \\
  5  & 72  & 72 & 0.0027 & 0.0004 & 84.4 \\
  6  & 88  & 88 & 0.0098 & 0.0010 & 90.1 \\
  7  & 94  & 94 & 0.0102 & 0.0015 & 85.3 \\
  8  & 102 & 102 & 0.0129 & 0.0011 & 91.6 \\
  9  & 115 & 115 & 0.0094 & 0.0015 & 82.5 \\
  10 & 116 & 116 & 0.0097 & 0.0017 & 91.9 \\ 
\hline  
\end{tabular}}
\end{table}

The computational time for D-SP algorithm grows with the distance between source and target nodes. However, computational times for MS-SP show to be a plateau value of an order of magnitude smaller than that D-SP method. This is clearly shown in figures \ref{fig:ColoradoT} and \ref{fig:AlcalaT} (with the results on Colorado and Alcal\'a utility networks).

\begin{figure}[!ht]
\centering
\includegraphics[width=0.9\linewidth]{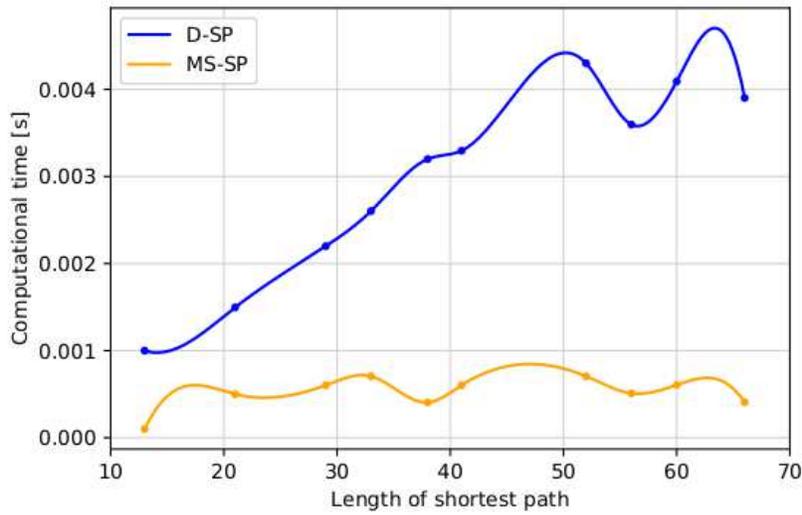}
\caption{Computational time for D-SP and MS-SP algorithms, for Colorado water network}
\label{fig:ColoradoT}
\end{figure}

\begin{figure}[!ht]
\centering
\includegraphics[width=0.9\linewidth]{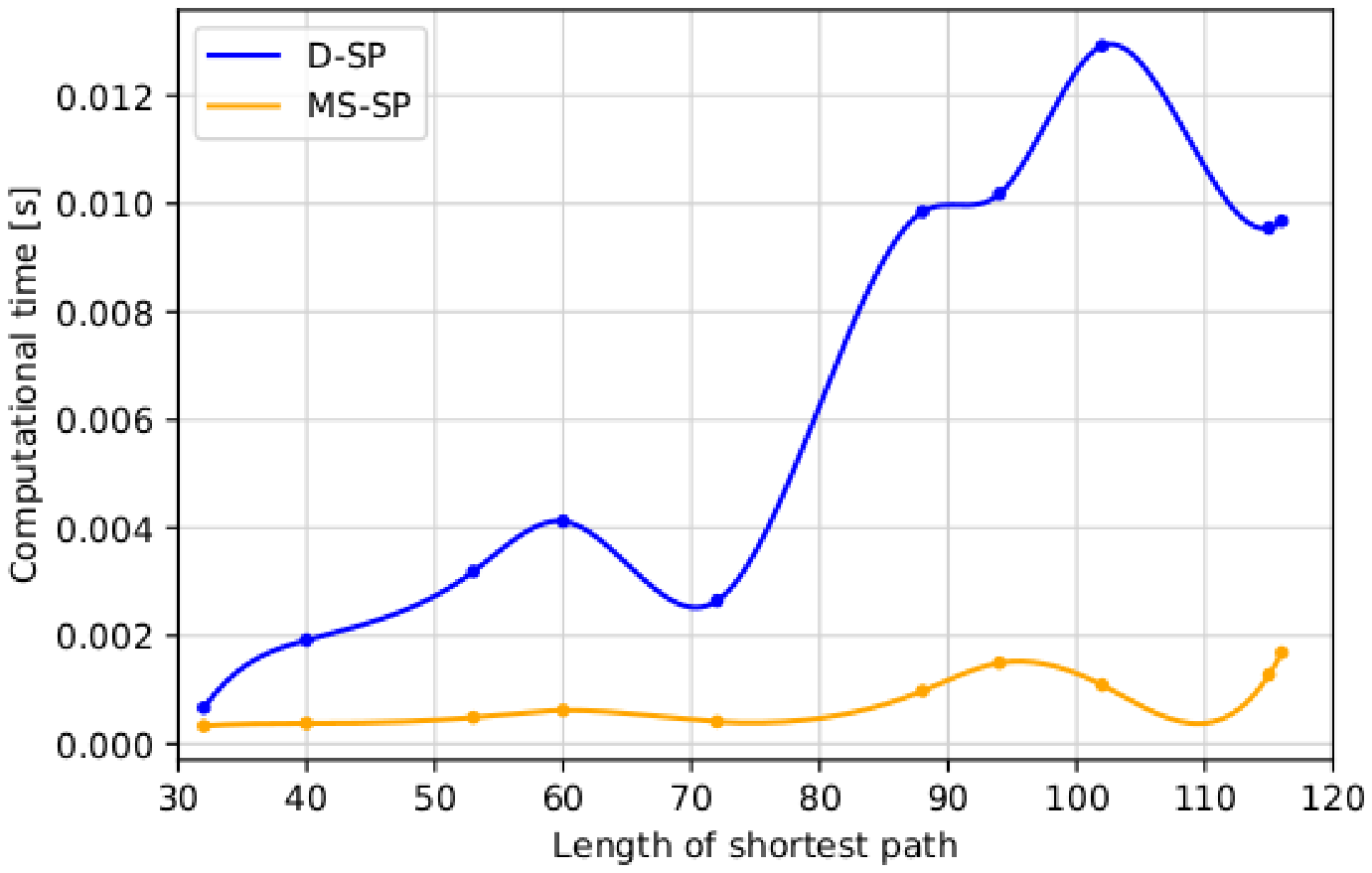}
\caption{Computational time for D-SP and MS-SP algorithms, for Alcal\'a water network}
\label{fig:AlcalaT}
\end{figure}

Table \ref{tab:Colorado} shows the difference in percentage between the D-SP and the proposed MS-SP computational time for Colorado. This difference on time ranges from 66\% to 90\%. Table \ref{tab:Alcala} shows the difference in percentage between the D-SP and the proposed MS-SP computational time for Alcal\'a. This difference on time ranges from 50\% to 92\%. Both differences on computational time stand as a conspicuous time reduction for computing the shortest path.

The MS-SP algorithm is implemented in Python 3.6. All the simulations were run on a Linux Xubuntu 16.04 PC with 2.13~GHz Intel\textregistered $\;$ Core\texttrademark $\;$ i3 CPU m330 64 GB of memory and 4.00 GB of RAM.

\section{Discussion}
\label{sec:discussion}
An appealing by-product of the current research is the Small-World properties of the collapsed layout of the original network which preserves essential information about the system, but it is simultaneously characterised by a dramatically size reduction. Indeed, one of the most useful management strategies for WDNs is the sectorisation, which consists in dividing the system in smaller and monitored districts, connected each other by a few number of pipes. Sectorisation helps the pressure management, the leakage detection and reduction, the spreading reduction of contaminants. For this reason, clustering algorithms have been largely applied in hydraulic engineering to define the optimal configuration of districts that balance the aforementioned positive aspects to the resilience reduction that could happen as consequence of the closure of some boundary pipes (boundary links $m_{ex}$ in the model). The spatial-temporal variability of the functioning conditions, such as the change in water request all over the system, could invalidate the designed sectorisation, by compromising the performance of the WDN. An adaptive/dynamic approach could be a valid solution, providing the aggregation/desegregation of districts (clusters) according to specific conditions. In this regard, the following constraints should be respected: 
\begin{itemize}
        \item the new aggregate districts include the former ones without splitting them;
        \item the previous clustering layout and the devices already installed have to be exploited; 
        \item the set of new boundary links is included in the set of boundary links of the original partitioning; 
\end{itemize}

As a consequence of working with the MS network layout the management of the system is simplified. In addition, the new investment costs is minimised, even nullified, and the computational burden of the whole procedure is reduced. This structural knowledge comes in the form of pairwise must-link (boundary links) and cannot link (internal links) constraints to be respected at each step of clustering by means a semi-supervised approach (which is particularly suitable for working with real-world systems if background knowledge about the structure are available).

The MS network implicitly takes into account the aforementioned structural knowledge, thanks to the shift in the structure to a low interconnected small-world clusters. In this way, it is ensured that any clustering algorithm provides a solution in which the novel set of boundary links constitutes a sub-set of the boundary links of the original cluster layout. On top of this, a network community detection algorithm splits a network in such way that each cluster is formed by elements having a high density connection between each other and a lower probability to be connected to items belonging to other clusters. 

\begin{figure}[!ht]
    \centering
    \begin{subfigure}[t]{0.45\textwidth}
        \includegraphics[width=\textwidth]{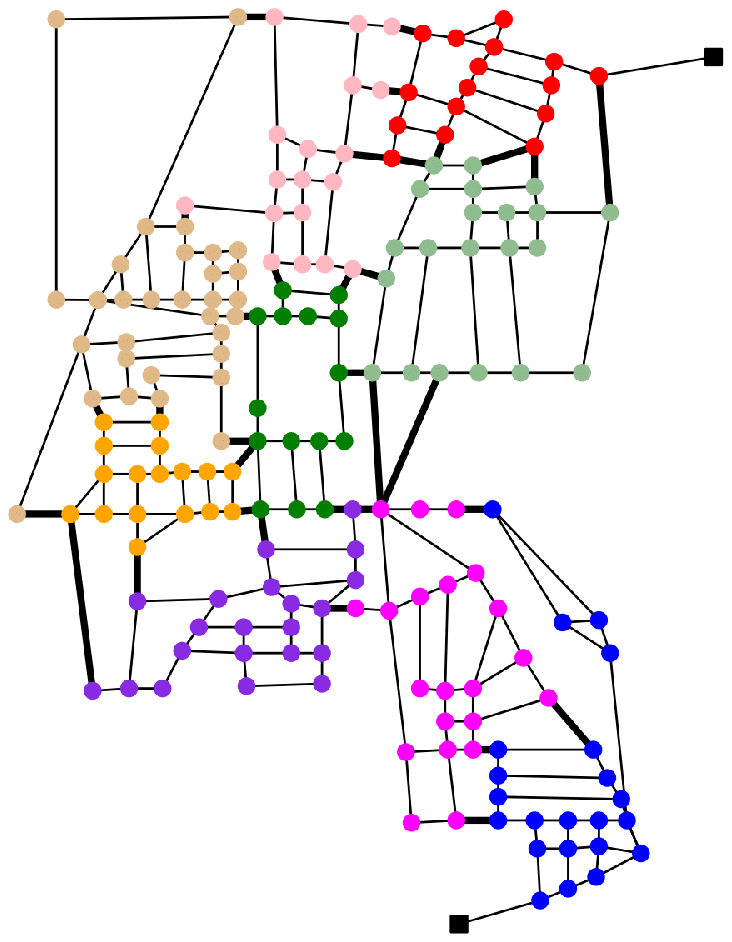}
        \caption{Actual district layout for a WDN}
        \label{fig:3a}
    \end{subfigure}
    ~
    \begin{subfigure}[t]{0.45\textwidth}
        \includegraphics[width=\textwidth]{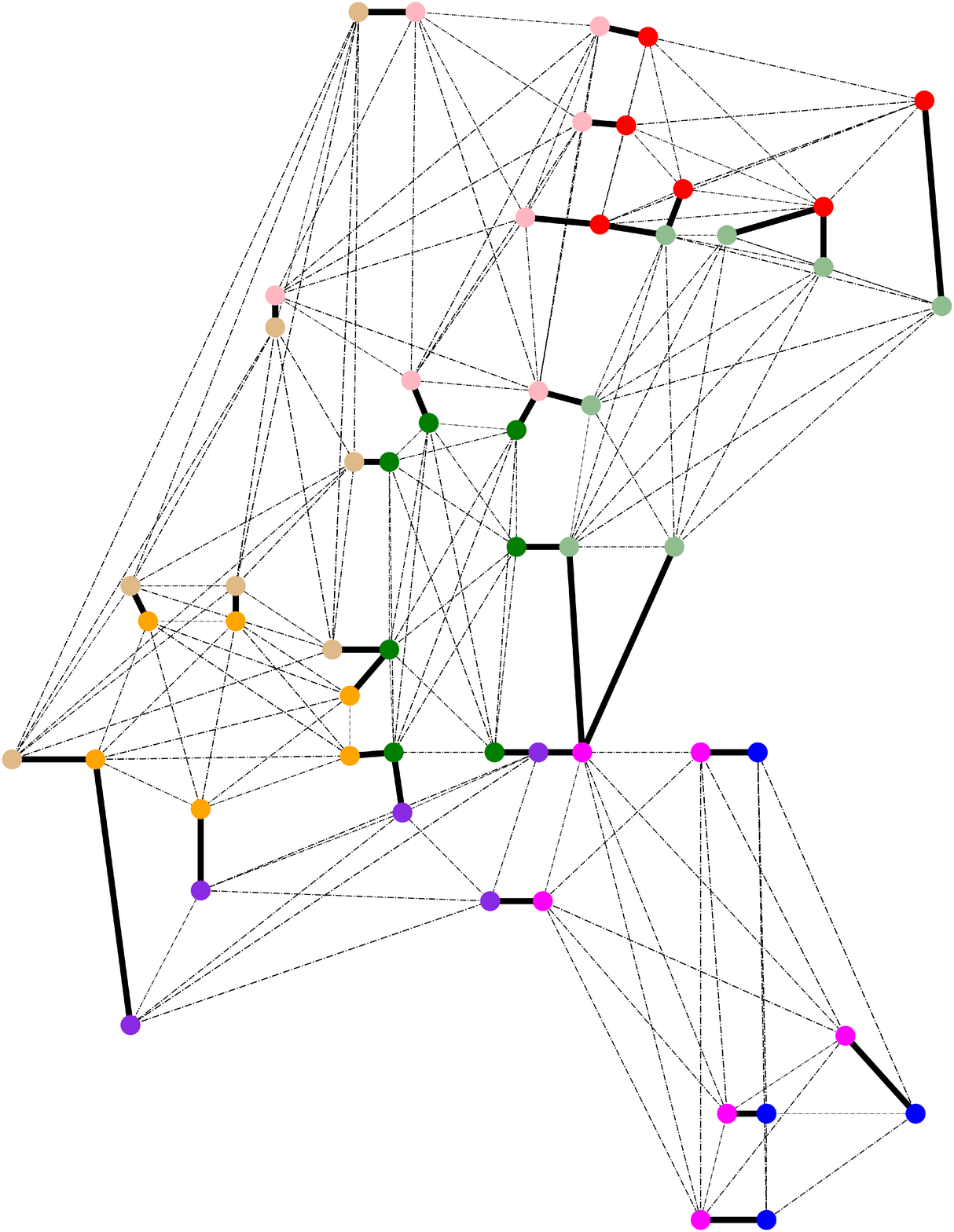}
        \caption{MS layout for actual district layout}
        \label{fig:3b}
    \end{subfigure}
    ~
    \begin{subfigure}[t]{0.45\textwidth}
        \includegraphics[width=\textwidth]{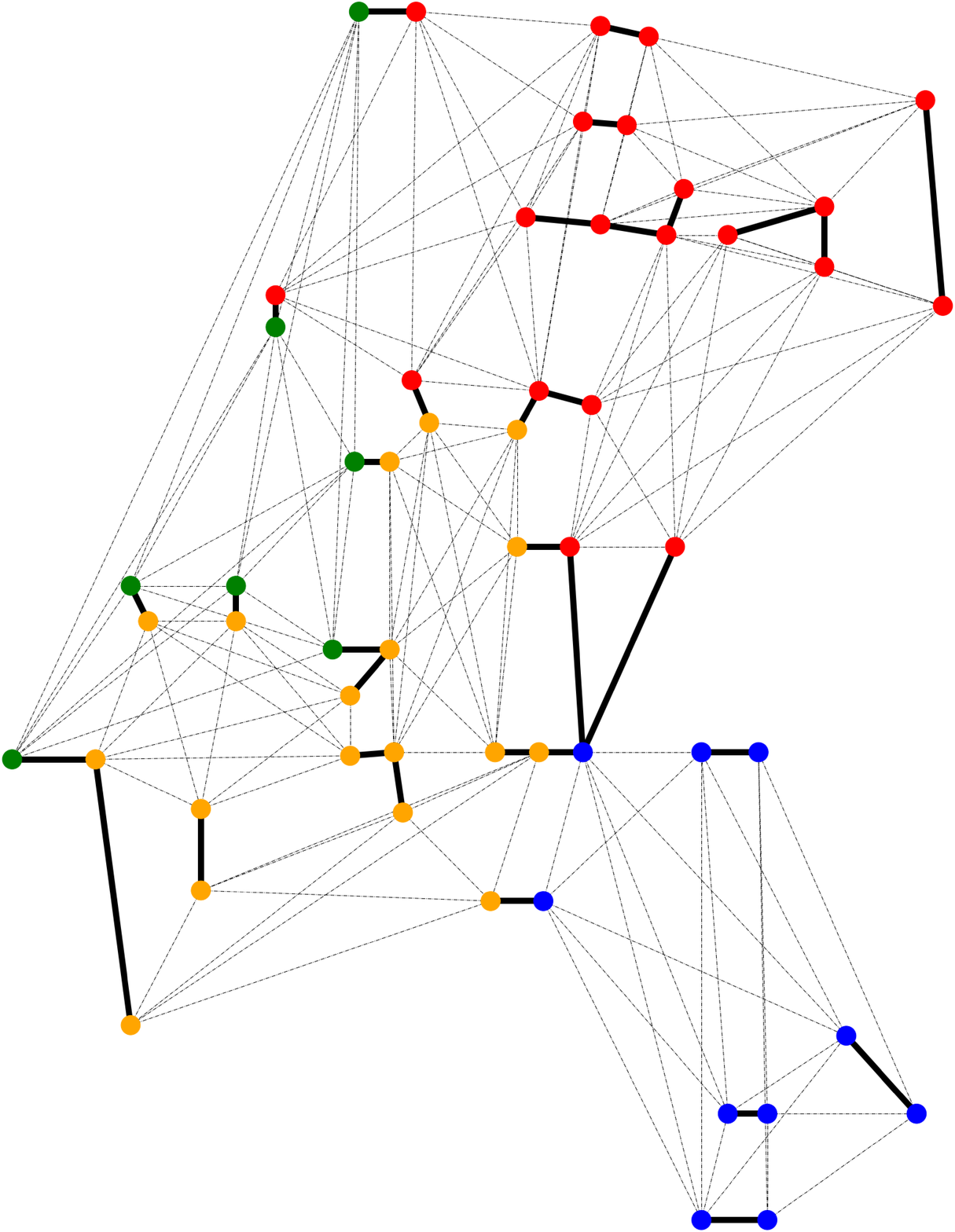}
        \caption{MS layout for new district layout}
        \label{fig:4a}
    \end{subfigure}
    ~
    \begin{subfigure}[t]{0.45\textwidth}
        \includegraphics[width=\textwidth]{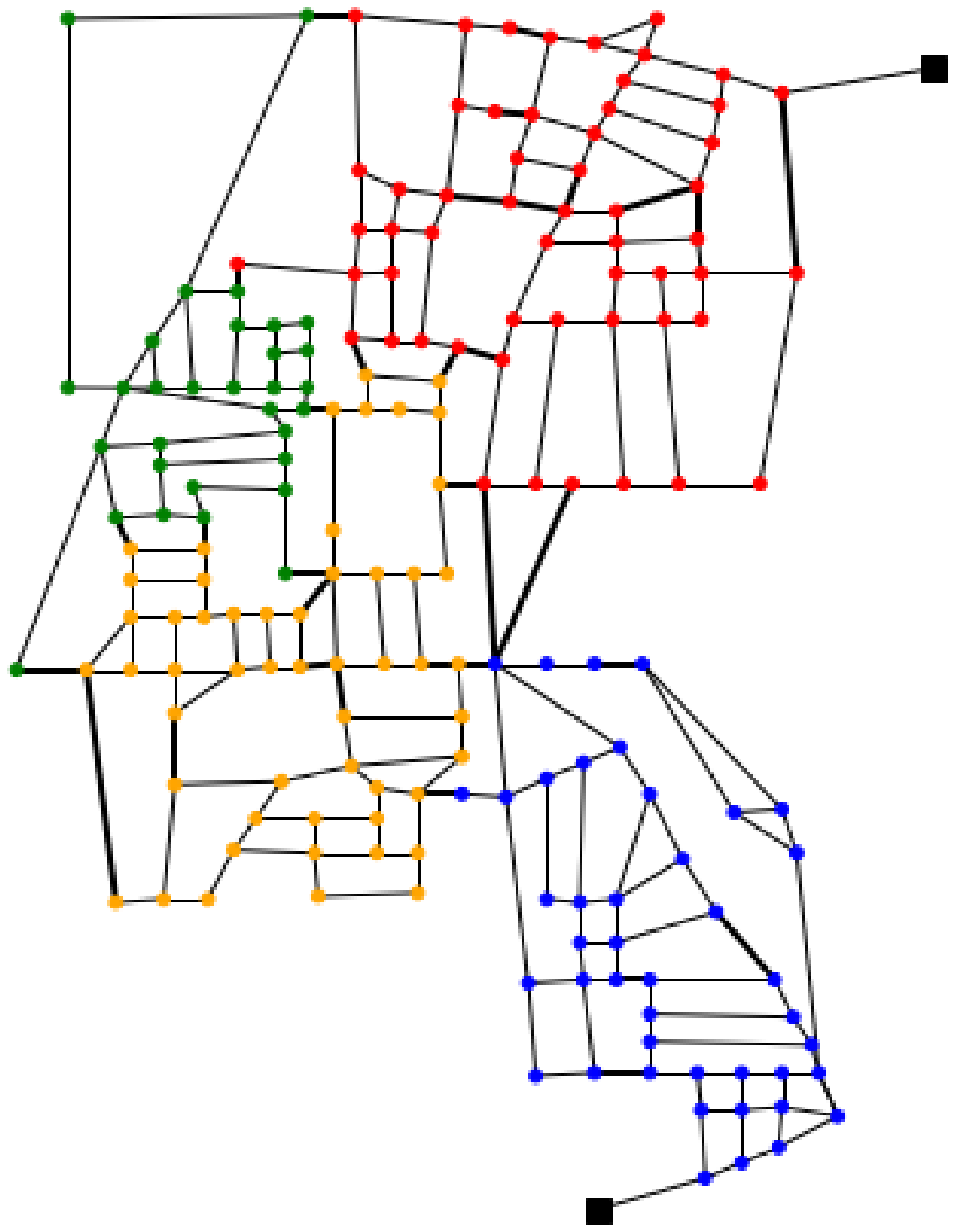}
        \caption{New district layout for a WDN}
        \label{fig:4b}
    \end{subfigure}
    \caption{Graphical explanation for the dynamic aggregation / disaggregation process for the partitioning of a WDN through the application of the MS layout}
    \label{fig:MSprocess} 
\end{figure}

\textcolor{black}{The steps of the procedure are shown in Figure \ref{fig:MSprocess}. First, the actual clustering layout of the WDN is detected (see Figure \ref{fig:3a}) and then the corresponding MS network is built (see Figure \ref{fig:3b} in which it is evident the size reduction of the WDN) This figure also shows the key elements, such as the boundary nodes of each cluster (highlighted by their corresponding district colour at Figure Figure \ref{fig:3a}), the boundary links (bold black line), and the internal links (thin dashed grey line). The set of hyper-links (boundary links and internal links) is crucial for automatically implementing the semi-supervised clustering during the aggregation phase. Boundary links represent the connectivity between different clusters, while the internal links constitute the internal connectivity of each cluster (according to the shortest path linking each pairs of boundary nodes belonging to the same cluster). After that, the previous districts are aggregated in the MS network (see Figure \ref{fig:4a}), by applying a clustering algorithm to provide the new clustering layout. Finally, in Figure \ref{fig:4b} the new district configuration is shown, with the new bigger clusters that perfectly include the former ones (avoiding to split them), and the boundary set which constitutes a sub-set of the previous one. This dynamic aggregation / disaggregation process ensures that the districts in each phase are kept in control, exploiting the devices already installed in the WDN.}

\section{Conclusions}
\label{sec:conclusions}
This paper proposes a novel method to efficiently solve the shortest path problem for large-scale water networks (and other utility networks), showing the potentialities on its application to their management and protection. The algorithm is based on a community structure principle, which aids to collapse the original network into a limited set of key elements. This is made through the novel concept of a multiscale (MS) network that reduces the original system into a series of interconnected, landmark, nodes. These landmark nodes are connected by a family of links coming both from the original network and the scale process where links and nodes are aggregated into hyper-links. The hyper-links are weighted by the shortest-path distances between their corresponding extreme nodes. The MS network structure easier the shortest path computation as the network's dimensionality is significantly reduced. Computing the shortest path is broken down to be done in several but smaller areas instead of directly using the whole original network. 
The simulation results over two urban water utilities confirm the efficiency of the proposed multiscale shortest path algorithm. This provides an exact solution for the problem in a significantly lower computational time than using Dijkstra's algorithm. The approach conveniently scales to big-size networks.
\textcolor{black}{The reduced size of the multiscale network coming from the application of the proposed algorithm results in a dramatic advantage for speeding up and simplifying the management of water systems by means the definition of optimal metered districts.}

Further works will benefit of this process since, in general, links connecting key nodes are often closed (i.e. most of the boundary pipes connecting nodes belonging to different network areas). In addition to utility networks applications, the multiscale shortest path usefulness extends to critical infrastructures such as transportation, communication, logistic and supply networks. The proposal will also improve the communication speed in general big-size networked systems and will back-up further near real-time operations. 


\bibliographystyle{model2-names}
\bibliography{refs}

\end{document}